\documentclass{amsart}
%

\usepackage{mathrsfs,amssymb,setspace}

\theoremstyle{plain}
\newtheorem{theorem}{Theorem}[section]

\theoremstyle{definition}
\newtheorem{notation}[theorem]{Notation}
\newtheorem{example}[theorem]{Example}
\newtheorem{definition}[theorem]{Definition}

\theoremstyle{remark}

\usepackage[all]{xy}

\newcommand{\sis}[2]{^{#1}\!\zeta_{#2}}

\begin{document}


\title[Syntactic semigroup problem for the Semigroup Reducts of $A^+(B_n)$]{Syntactic semigroup problem for the semigroup reducts of  Affine Near-semirings over\\ Brandt Semigroups}
\author[Jitender Kumar]{Jitender Kumar}
\address{School of Mathematics,  Harish-Chandra Research Institute, Allahabad, India}
\email{jitenderkumar@hri.res.in}
\author[K. V. Krishna]{K. V. Krishna}
\address{Department of Mathematics, Indian Institute of Technology Guwahati, Guwahati, India}
\email{kvk@iitg.ac.in}


\begin{abstract}
The \emph{syntactic semigroup problem} is to decide whether a given finite semigroup is syntactic or not. This work investigates the syntactic semigroup problem for both the semigroup reducts of $A^+(B_n)$, the affine near-semiring over a Brandt semigroup $B_n$. It is ascertained that both the semigroup reducts of $A^+(B_n)$ are syntactic semigroups.
\end{abstract}

\subjclass[2010]{20M35 and 68Q70.}

\keywords{Syntactic semigroup, Near-semiring, Brandt semigroup, Affine maps}

\maketitle

\section{Introduction}

The concept of syntactic semigroup (in particular, syntactic monoid) plays a vital role in algebraic automata theory. In fact, a language is recognizable if and only if its syntactic monoid is finite (cf. \cite{b.lawson,b.pin97}). Sch\"{u}tzenberger made a nontrivial use of syntactic semigroups to characterize an important subclass of recognizable languages, viz., star-free languages. Sch\"{u}tzenberger theorem states that a recognizable language is star-free if and only if its syntactic semigroup is finite and aperiodic \cite{a.schutzenberger65}. Simon established that a recognizable language language is piecewise testable if and only if its syntactic monoid is $\mathcal{J}$-trivial \cite{ic.Simon75}. Syntactic semigroups have been studied in various contexts (e.g., \cite{ic.Jackson05,a.kelarev00}).

The \emph{syntactic semigroup problem} is to decide whether a given finite semigroup is syntactic or not. Many authors have investigated the syntactic semigroup problem for various semigroups. Lallement and Milito proved that any finite semilattice of groups with identity is the syntactic monoid of a recognizable language \cite{a.Lallement75}. In \cite{a.koubek82}, Goral{\v{c}}{\'{\i}}k et al. observed that every finite inverse semigroup is syntactic. For a finite monoid $M$, an algorithm solving syntactic monoid problem for a large class of finite monoids in O($|M|^3$) time is obtained by Goral{\v{c}}{\'{\i}}k and Koubek \cite{a.koubek98}.

In \cite{a.jk13}, the authors have investigated the semigroup structure of the semigroup reducts of the affine near-semirings over Brandt semigroups.  In this paper, we study the syntactic semigroup problem for these semigroups reducts. Rest of the paper has been organized as follows. In Section 2, we present the required preliminaries. Sections 3 and 4 are devoted to the main work of the paper. We establish that the additive and also the multiplicative semigroup reducts of affine near-semirings over Brandt semigroups are syntactic semigroups in sections 3 and 4, respectively. The paper is concluded in Section 5.

\section{Preliminaries}

In this section, we present the necessary background material in two subsections. One is on syntactic semigroups to recall the essential definitions and results which are used in this paper. The other subsection is to introduce the basic structure of affine near-semirings over Brandt semigroups.

\subsection{Syntactic semigroups} The required material on syntactic semigroups is presented in this subsection. For more details one may refer to
\cite{b.lawson,b.pin97}

Let $\Sigma$  be a nonempty finite set called an \emph{alphabet} and its elements are called \emph{letters/symbols}. A \emph{word} over $\Sigma$ is a finite sequence of letters written by juxtaposing them.  The set of all words over $\Sigma$ forms a monoid with respect to concatenation of words, called the \emph{free monoid} over $\Sigma$ and it is denoted by $\Sigma^*$. The identity element of $\Sigma^*$ is the empty word (the empty sequence of letters), which is denoted by $\varepsilon$. The (free) semigroup of all nonempty words over $\Sigma$ is denoted by $\Sigma^+$. A \emph{language} over $\Sigma$ is a subset of the free monoid $\Sigma^*$.

\begin{definition}
Let $L$ be an arbitrary subset of a (multiplicative) semigroup $S$. The equivalence relation  $\approx_L$ on $S$ defined by
\[x \approx_L y   \; \text{iff} \; uxv \in L \Longleftrightarrow uyv \in L\; \text{for all}\; u, v \in S\] is a congruence known as the \emph{syntactic congruence} of $L$.
\end{definition}

\begin{definition}
For $L \subseteq \Sigma^+$, the quotient semigroup $\Sigma^+/_{\approx_L}$  is known as the \emph{syntactic semigroup} of the language $L$. Further, the quotient monoid $\Sigma^*/_{\approx_L}$ is called the \emph{syntactic monoid} of the language $L$.
\end{definition}

In the following, we present a characteristic property of syntactic semigroups through automata and their recognizable languages.

An \emph{automaton} is a quintuple $\mathcal{A} = (Q, \Sigma, q_0, T, \delta)$, where  $Q$ is a nonempty finite set called the set of \emph{states}, $\Sigma$ is an \emph{input alphabet}, $q_0 \in Q$ called the \emph{initial state} and $T$ is a subsets of $Q$, called the set of \emph{final states}, and $\delta : Q \times \Sigma \rightarrow Q$ is a function, called the \emph{transition function}.
For $x \in \Sigma^*$,  $f_x : Q \rightarrow Q$ is defined by, for all $q \in Q$,
\begin{enumerate}
\item[(i)] if $x = \varepsilon$, then $qf_x = q$;
\item[(ii)] if $x = ay$ for $a \in \Sigma$ and $y \in \Sigma^*$, then $qf_x= (q, a)\delta f_y$.
\end{enumerate}
The set of functions $\{f_x \;|\;x \in \Sigma^*\}$ forms a monoid under the composition of functions called the \emph{transition monoid} of $\mathcal{A}$, and it is denoted by $\mathscr{T}(\mathcal{A})$.

The \emph{language accepted/ recognized} by an automaton $\mathcal{A}$, denoted by $L(\mathcal{A})$, is defined by  \[L(\mathcal{A}) = \{x \in \Sigma^* \; | \; q_0f_x \in T\}.\]
A language $L$ over $\Sigma$ is said to be \emph{recognizable} if there is an automaton $\mathcal{A}$ with input alphabet $\Sigma$ such that $L(\mathcal{A}) = L$.
An  automaton $\mathcal{A}$ is said to be \emph{minimal} if the number of states of $\mathcal{A}$ is less than or equal to the number of states of any other automaton accepting $L(\mathcal{A})$.

\begin{theorem}\label{t.syn-min}
Let $L$ be a recognizable language over $\Sigma$. The syntactic monoid of $L$ is isomorphic to the transition monoid of the minimal automaton accepting $L$.
\end{theorem}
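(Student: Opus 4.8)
The plan is to exhibit an explicit isomorphism between the syntactic monoid $\Sigma^*/_{\approx_L}$ and the transition monoid $\mathscr{T}(\mathcal{A})$ of a minimal automaton $\mathcal{A} = (Q, \Sigma, q_0, T, \delta)$ accepting $L$, via the natural map that sends a word to the transformation it induces. First I would define $\phi \colon \Sigma^* \to \mathscr{T}(\mathcal{A})$ by $\phi(x) = f_x$ and check that $\phi$ is a surjective monoid homomorphism: surjectivity is immediate from the definition of $\mathscr{T}(\mathcal{A})$, while the homomorphism property amounts to the identity $f_{xy} = f_x f_y$ (with $f_\varepsilon$ the identity map), which I would verify by a routine induction on the length of $x$ using clauses (i) and (ii) of the definition of $f_x$. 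By the first isomorphism theorem, it then suffices to identify the kernel congruence $\ker\phi$, under which $x$ and $y$ are related iff $f_x = f_y$, with the syntactic congruence $\approx_L$.

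The containment $\ker\phi \subseteq {\approx_L}$ requires no minimality. Recording that $w \in L$ iff $q_0 f_w \in T$, for arbitrary $u,v \in \Sigma^*$ I would rewrite $q_0 f_{uxv} = q_0 f_u f_x f_v$; if $f_x = f_y$, then $q_0 f_u f_x f_v = q_0 f_u f_y f_v = q_0 f_{uyv}$, whence $uxv \in L \Leftrightarrow uyv \in L$, that is, $x \approx_L y$.

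For the reverse containment ${\approx_L} \subseteq \ker\phi$ I would invoke two structural properties that a \emph{minimal} automaton enjoys: every state is \emph{accessible} (of the form $q_0 f_u$ for some $u \in \Sigma^*$), and distinct states are \emph{distinguishable} (for $p \neq p'$ there is $v$ with exactly one of $pf_v$, $p'f_v$ in $T$). Assuming $x \approx_L y$, to prove $f_x = f_y$ I would fix an arbitrary state $q = q_0 f_u$ and suppose, for contradiction, that $q f_x = q_0 f_{ux}$ and $q f_y = q_0 f_{uy}$ differ; distinguishability then furnishes a $v$ with $uxv \in L \Leftrightarrow uyv \notin L$, contradicting $x \approx_L y$. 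Hence $q f_x = q f_y$ for every $q$, so $f_x = f_y$.

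The main obstacle is to justify accessibility and distinguishability from the stated definition of minimality (fewest states). I would argue the contrapositive: if some state is inaccessible it may simply be deleted, and if two states are indistinguishable they may be merged, each operation producing an automaton with strictly fewer states that still accepts $L$; since $\mathcal{A}$ has the least possible number of states, neither situation can arise. Combining the two containments gives $\ker\phi = {\approx_L}$, and the first isomorphism theorem then yields $\mathscr{T}(\mathcal{A}) \cong \Sigma^*/_{\approx_L}$, the syntactic monoid of $L$, as required.
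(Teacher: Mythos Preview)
Your argument is the standard textbook proof and is correct as written: the surjective homomorphism $x \mapsto f_x$, the easy containment $\ker\phi \subseteq {\approx_L}$, and the converse via accessibility and distinguishability of a state-minimal automaton are exactly the ingredients one finds in, e.g., Lawson or Pin. Your handling of the one nontrivial point---deriving accessibility and distinguishability from the bare ``fewest states'' definition by deleting unreachable states or merging indistinguishable ones---is also sound.

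There is nothing to compare against, however: in the paper this theorem appears in the preliminaries as a quoted result (with references to \cite{b.lawson,b.pin97}) and is stated without proof. So your proposal does not conflict with anything in the paper; it simply supplies a proof the authors chose to omit.
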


We now state another characterization for syntactic semigroups using disjunctive sets.

\begin{definition}
A subset $D$ of a semigroup $S$ is called \emph{disjunctive} in $S$ if the syntactic congruence $\approx_D$ of $D$ is the equality relation on $S$.
\end{definition}

\begin{theorem}\label{t.disj.a+bnc}
A finite semigroup is the syntactic semigroup of a recognizable language if and only if it contains a disjunctive subset.
\end{theorem}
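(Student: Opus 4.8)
The plan is to prove both implications directly, exhibiting an explicit witness in each direction. Throughout I adopt the usual convention that the contexts $u,v$ in the syntactic congruence range over $S^1$ (so that empty contexts are allowed); in particular, on $\Sigma^+$ the relation $\approx_L$ is the standard syntactic congruence, and any $L$ is a union of $\approx_L$-classes.

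For the forward implication, suppose $S = \Sigma^+/_{\approx_L}$ for a recognizable $L \subseteq \Sigma^+$, and let $\eta : \Sigma^+ \to S$ be the quotient homomorphism. Since $L$ is saturated by its own syntactic congruence, $L = \eta^{-1}(D)$ where $D = \eta(L)$; in particular $w \in L$ iff $\eta(w) \in D$. I claim $D$ is disjunctive in $S$. Take $a = \eta(x)$ and $b = \eta(y)$ with $a \approx_D b$. For arbitrary $u,v \in \Sigma^*$ one has $\eta(uxv) = \eta(u)\,a\,\eta(v)$ and $\eta(uyv) = \eta(u)\,b\,\eta(v)$, so
\[
uxv \in L \iff \eta(u)\,a\,\eta(v) \in D \iff \eta(u)\,b\,\eta(v) \in D \iff uyv \in L,
\]
the middle equivalence being exactly $a \approx_D b$ applied to the contexts $\eta(u),\eta(v) \in S^1$. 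Hence $x \approx_L y$, i.e. $a = b$, so $\approx_D$ is the equality relation and $D$ is disjunctive.

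For the converse, suppose $S$ contains a disjunctive subset $D$. I would take the alphabet $\Sigma = S$ itself and let $\varphi : S^+ \to S$ be the evaluation homomorphism sending a word $s_1 \cdots s_k$ to its product in $S$; set $L = \varphi^{-1}(D)$. Because $S$ is finite, $L$ is recognizable (it is accepted by the automaton with state set $S^1$, right-multiplication transitions, and final states the preimages of $D$), so it suffices to identify the syntactic semigroup of $L$ with $S$. The decisive step is to show $\approx_L = \ker\varphi$. The inclusion $\ker\varphi \subseteq \approx_L$ is immediate, since $\varphi$ is a homomorphism and $L = \varphi^{-1}(D)$. For the reverse inclusion, assume $w \approx_L w'$ and put $a = \varphi(w)$, $b = \varphi(w')$; reading letters of $S$ as one-letter words, for any $p,q \in S^1$ one obtains $paq \in D \iff pwq \in L \iff pw'q \in L \iff pbq \in D$, so $a \approx_D b$, and disjunctivity forces $a = b$, i.e. $\varphi(w) = \varphi(w')$. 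Consequently $\Sigma^+/_{\approx_L} \cong \Sigma^+/\ker\varphi \cong S$, exhibiting $S$ as a syntactic semigroup.

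I expect the main obstacle to be the converse direction, specifically the bookkeeping in proving $\approx_L \subseteq \ker\varphi$: one must invoke disjunctivity at precisely the right moment and treat all four combinations of empty and nonempty contexts $p,q$ (the case $p = q = 1$ encoding membership in $D$ itself). The remaining points---saturation of $L$ in the forward direction and recognizability of $\varphi^{-1}(D)$ from the finiteness of $S$---are routine once the two constructions are in place.
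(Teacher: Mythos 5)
The paper does not prove this theorem at all: it is quoted in the preliminaries as a known characterization (the surrounding text points to the references of Lawson and Pin), so there is no ``paper proof'' to compare against. Your argument is the standard proof of this classical fact, and it is correct: in the forward direction you transport the syntactic congruence of $L$ to the quotient $S=\Sigma^+/_{\approx_L}$ and check that $D=\eta(L)$ is disjunctive; in the converse you take $\Sigma=S$, $L=\varphi^{-1}(D)$ for the evaluation homomorphism $\varphi\colon S^+\to S$, and show $\approx_L=\ker\varphi$, with disjunctivity of $D$ supplying exactly the inclusion $\approx_L\subseteq\ker\varphi$. The one place where you had to make a choice is the convention on contexts: the paper's Definition of $\approx_L$ literally quantifies over $u,v\in S$, under which $L$ need not be saturated by $\approx_L$ and the step $L=\eta^{-1}(D)$ would fail (e.g.\ $L=\{a\}$ over $\Sigma=\{a\}$); your explicit adoption of contexts in $S^1$ is the standard reading and is what makes both the saturation step and the empty-context cases of the converse go through, so flagging it up front was the right move rather than a cosmetic one.
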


\begin{definition}
A star-free expression over an alphabet $\Sigma$ is defined inductively as follows.
\begin{itemize}
\item [(i)] $\varnothing, \varepsilon$ and $a$, where $a \in \Sigma$, are star-free expressions representing the languages $\varnothing, \{\varepsilon\}$ and $\{a\}$, respectively.
\item [(ii)] If $r$ and $s$ are star-free expressions representing the languages $R$ and $S$, respectively, then $(r + s)$, $(rs)$ and $r^{\complement}$ are star-free expressions representing the languages $R \cup S, RS$ and $R^\complement$, respectively. Here, $R^{\complement}$ denotes the complement of $R$ in $\Sigma^*$.
\end{itemize}
A language is said to be \emph{star-free} if it can be represented by a star-free expression.
\end{definition}

\begin{theorem}[\cite{a.schutzenberger65}]\label{6.t.schu}
A recognizable language is star-free if and only if its syntactic semigroup is aperiodic.
\end{theorem}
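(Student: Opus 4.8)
The plan is to prove the two implications separately, where throughout a finite semigroup (or monoid) $M$ is called \emph{aperiodic} if it contains no nontrivial subgroup, equivalently (since $M$ is finite) if there is an integer $n$ with $x^{n}=x^{n+1}$ for every $x\in M$. I will use freely the standard facts that the syntactic semigroup of a recognizable language divides every semigroup recognizing the language, and that the class of finite aperiodic semigroups is closed under subsemigroups, divisors and finite direct products \cite{b.pin97}.

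\emph{Star-free $\Rightarrow$ aperiodic.} I would argue by structural induction on a star-free expression representing $L$, showing that the class $\mathcal{A}$ of languages whose syntactic semigroup is aperiodic contains the base languages $\varnothing$, $\{\varepsilon\}$, $\{a\}$ (whose syntactic semigroups are trivial, hence aperiodic) and is closed under the three star-free constructors. Closure under complement is immediate, since $L$ and $L^{\complement}$ share the same syntactic congruence, hence the same syntactic semigroup. Closure under union holds because $L_1\cup L_2$ is recognized by the direct product $M_1\times M_2$ of the two syntactic semigroups; a direct product of aperiodic semigroups is aperiodic, and the syntactic semigroup of $L_1\cup L_2$ divides any recognizing semigroup, so it is aperiodic as well. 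The genuinely nontrivial closure is under concatenation: here I would recognize $L_1L_2$ by the Schützenberger product $M_1\diamond M_2$ and invoke the lemma that the Schützenberger product of aperiodic semigroups is again aperiodic, so that the syntactic semigroup of $L_1L_2$ divides an aperiodic semigroup and is therefore aperiodic.

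\emph{Aperiodic $\Rightarrow$ star-free.} This is the deep direction, and I would prove it by induction on $|M|$, where $M$ is a finite aperiodic monoid recognizing $L$ through a morphism $\varphi:\Sigma^{*}\to M$. Since $L=\varphi^{-1}(P)$ for some $P\subseteq M$ and star-free languages are closed under finite union, it suffices to show that each $\varphi^{-1}(m)$ is star-free. When $|M|=1$ we have $\varphi^{-1}(m)\in\{\varnothing,\Sigma^{*}\}$, and $\Sigma^{*}=\varnothing^{\complement}$ is star-free. For the inductive step I would fix a letter $c$ with $s=\varphi(c)\neq 1$ and pass to the \emph{local divisor} of $M$ at $s$, namely the monoid carried by $sM\cap Ms$ under the product $(ps)\circ(sq)=psq$ (well defined because each of its elements is simultaneously a left multiple $ps$ and a right multiple $sq$ of $s$, with identity $s$); in the aperiodic case $1\notin sM\cap Ms$, so this is a proper aperiodic divisor of $M$. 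The combinatorial heart is then to decompose an arbitrary word by its occurrences of $c$ and to express the condition $\varphi(w)=m$ through finitely many conditions: prefix/suffix conditions recognized over the smaller alphabet $\Sigma\setminus\{c\}$ (or by proper quotients of $M$) together with conditions on the $c$-bordered factors, which by the inductive hypothesis applied to the local divisor are star-free. The operation of ``cutting at the first occurrence of $c$'' is realized purely by complementation, since the set of words avoiding $c$ is $(\varnothing^{\complement}c\,\varnothing^{\complement})^{\complement}$, a star-free language.

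The main obstacle is precisely this last inductive step. Aperiodicity is indispensable because it forbids any ``counting modulo a period'' behaviour: a nontrivial cyclic subgroup of $M$ would force $L$ to test the length of some factor modulo an integer, which provably cannot be captured by union, product and complement and genuinely requires a star. Conversely, once $M$ is aperiodic every membership condition reduces to the presence or absence of finitely many bounded patterns, and complementation is exactly the tool that turns such pattern conditions into star-free expressions. Verifying that the local-divisor decomposition exhausts all words and that the bookkeeping of prefixes, suffixes and bordered factors assembles into a single star-free expression is the technical crux, and this is where essentially all of the content of Schützenberger's theorem resides.
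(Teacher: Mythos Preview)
The paper does not prove this theorem at all: it is stated with a citation to Sch\"utzenberger's 1965 paper and then used as a black box in Theorems~\ref{6.t.b1+}, \ref{6.t.a+bn+.sc} and \ref{6.t.b1c}. There is therefore nothing in the paper to compare your proposal against.

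For what it is worth, your outline is a sound high-level sketch of a correct proof. The forward direction via closure of aperiodic semigroups under products and divisors, together with the Sch\"utzenberger product for concatenation, is standard. For the converse you are not following Sch\"utzenberger's original induction on the $\mathcal{J}$-classes but rather the local-divisor argument of Diekert and Gastin; this is a legitimate and nowadays preferred route, and your justification that $1\notin sM\cap Ms$ when $s\ne 1$ in an aperiodic monoid is correct. Two small points you would need to tidy up in a full write-up: the case where every letter maps to $1$ (so no suitable $c$ exists) must be disposed of separately, and the ``bookkeeping'' you allude to---showing that the decomposition at occurrences of $c$ really yields a star-free expression via the two smaller inductive instances---is where all the work is and would need to be written out rather than gestured at.
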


\subsection{Affine near-semirings over Brandt semigroups}

In this subsection, we present necessary details of affine near-semirings over Brandt semigroups. For more details one may refer to \cite{t.jk14,a.jk13}

\begin{definition}
An algebraic structure $(\mathcal{N}, +, \cdot)$ is said to be a \emph{near-semiring} if
\begin{enumerate}
\item $(\mathcal{N}, +)$ is a semigroup,
\item $(\mathcal{N}, \cdot)$ is a semigroup, and
\item $a \cdot (b + c) = a \cdot b + a \cdot c$, for all $a,b,c \in \mathcal{N}$.
\end{enumerate}
\end{definition}

\begin{example}
Let $(S, +)$ be a semigroup. The algebraic structure $(M(S), +, \circ)$ is a near-semiring, where $+$ is point-wise addition and $\circ$ is composition of mappings, i.e., for $x \in S$ and $f,g \in M(S)$,
\[x(f + g)= x f + x g \;\;\;\; \text{and}\;\;\;\; x(f \circ g) = (x f)g.\]
We write an argument of a function on its left, e.g. $xf$ is the value of a function $f$ at an argument $x$. In this paper, the composition $f \circ g$ will be denoted by $fg$.
\end{example}

Let $(S, +)$ be a semigroup. An element $f \in M(S)$ is said to be an \emph{affine map} if $f = g + h$, for some endomorphism $g$ and a constant map $h$ on $S$. The set of all affine mappings over $S$, denoted by ${\rm Aff}(S)$, need not be a subnear-semiring of $M(S)$. The \emph{affine near-semiring}, denoted by $A^+(S)$, is the subnear-semiring generated by $\text{Aff}(S)$ in $M(S)$. Indeed, the subsemigroup of $(M(S), +)$ generated by $\text{Aff}(S)$ equals $(A^+(S), +)$ (cf. \cite[Corollary 1]{a.kvk05}). If $(S, +)$ is commutative, then $\text{Aff}(S)$ is a subnear-semiring of $M(S)$ so that ${\rm Aff}(S) = A^+(S)$.

\begin{definition}\label{d.bs}
For any integer $n \geq 1$, let $[n] = \{1,2,\ldots,n\}$. The semigroup
$(B_n, +)$, where $B_n = ([n]\times[n])\cup \{\vartheta\}$ and the
operation $+$ is given by
\[ (i,j) + (k,l) =
                \left\{\begin{array}{cl}
                (i,l) & \text {if $j = k$;}  \\
                \vartheta     & \text {if $j \neq k $}
                  \end{array}\right.  \]
and, for all $\alpha \in B_n$, $\alpha + \vartheta = \vartheta + \alpha = \vartheta$, is known as \emph{Brandt semigroup}. Note that $\vartheta$ is the (two sided) zero element in $B_n$.
\end{definition}

Now we recall certain concepts and results on $A^+(B_n)$ from \cite{a.jk13}, which are useful in this present work. Let $(S, +)$ be a semigroup with zero element $\vartheta$. For  $f \in M(S)$, the \emph{support of $f$}, denoted by ${\rm supp}(f)$, is defined by the set
\[ {\rm supp}(f) = \{\alpha \in S \;|\; \alpha f \neq \vartheta\}.\]
A function $f \in M(S)$ is said to be of \emph{k-support} if the
cardinality of ${\rm supp}(f)$ is $k$, i.e. $|{\rm supp}(f)| = k$. If $k = |S|$ (or $k = 1$), then $f$ is said to be of \emph{full support} (or
\emph{singleton support}, respectively). For ease of reference,  we continue to use the following notations for the elements of $M(B_n)$, as given in \cite{a.jk13}.

\begin{notation}\

\begin{enumerate}
\item For $c \in B_n$, the constant map that sends all the elements of $B_n$ to $c$ is denoted by $\xi_c$. The set of all constant maps over $B_n$ is denoted by $\mathcal{C}_{B_n}$.
\item For $k, l, p, q \in [n]$, the singleton support map that sends $(k, l)$ to $(p, q)$ is denoted by $\sis{(k, l)}{(p, q)}$.
\item For $p, q \in [n]$, the $n$-support map which sends $(i, p)$ (where $1 \le i \le n$) to $(i\sigma, q)$ using a permutation $\sigma \in S_n$ is denoted by $(p, q; \sigma)$.  We denote the identity permutation on $[n]$ by $id$.
\end{enumerate}
\end{notation}

Note that $A^+(B_1) = \{(1, 1; id)\} \cup \mathcal{C}_{B_1}$. For $n \ge 2$, the elements of $A^+(B_n)$ are given by the following theorem.

\begin{theorem}[\cite{a.jk13}]\label{2.t.ce.a+bn}
For $n \geq 2$, $A^+(B_n)$ precisely contains $(n! + 1)n^2 + n^4 + 1$ elements with the following breakup.
\begin{enumerate}
\item All the $n^2 + 1$ constant maps.
\item All the $n^4$ singleton support maps.
\item The remaining $(n!)n^2$ elements are the $n$-support maps of the form $(p, q; \sigma)$, where $p, q \in [n]$ and $\sigma \in S_n$.
\end{enumerate}
\end{theorem}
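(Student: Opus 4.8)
The plan is to pin down the underlying set of $A^+(B_n)$ by first describing $\mathrm{Aff}(B_n)$ explicitly and then taking its additive closure, exploiting the fact recorded above that $(A^+(B_n),+)$ is precisely the subsemigroup of $(M(B_n),+)$ generated by $\mathrm{Aff}(B_n)$. Thus it suffices to (a) compute all affine maps $g+\xi_c$ with $g$ an endomorphism of $(B_n,+)$ and $c\in B_n$, and (b) close the resulting set under pointwise addition, showing the closure is exactly the set $\mathcal{S}$ of the three listed families; the count then follows by checking these families are pairwise disjoint for $n\ge 2$ and adding their cardinalities $n^2+1$, $n^4$, and $(n!)n^2$.

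First I would classify $\mathrm{End}(B_n,+)$. Writing $(i,j)g$ for the image of a nonzero element, the relation $(i,k)=(i,j)+(j,k)$ forces $(i,k)g=(i,j)g+(j,k)g$, while $j\ne k$ forces $(i,j)g+(k,l)g=\vartheta$. Since each diagonal idempotent must map to an idempotent of $B_n$, i.e. to $\vartheta$ or some $(a,a)$, I would set $I=\{i : (i,i)g\ne\vartheta\}$ and use the identity $(i,i)=(i,k)+(k,i)$ to show that if $i\in I$ and $k\notin I$ then $(i,i)g=\vartheta$, a contradiction; hence $I=\varnothing$ or $I=[n]$. The case $I=[n]$ yields the permutation endomorphisms $(i,j)\mapsto(i\sigma,j\sigma)$ with $\sigma\in S_n$ (injectivity of $i\mapsto a_i$ being forced by the $j\ne k$ relation), while $I=\varnothing$ gives $\xi_\vartheta$, and a separate treatment of $\vartheta g\ne\vartheta$ gives the constant maps $\xi_{(i,i)}$. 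Adding constants to these, a short computation gives: $g_\sigma+\xi_{(a,b)}$ is the $n$-support map $(a\sigma^{-1},b;\sigma)$, $\xi_{(i,i)}+\xi_{(a,b)}$ is $\xi_{(i,b)}$ when $a=i$ and $\xi_\vartheta$ otherwise, and every remaining affine map collapses to $\xi_\vartheta$. Ranging over $a,b,\sigma$, I conclude $\mathrm{Aff}(B_n)=\mathcal{C}_{B_n}\cup\{(p,q;\sigma)\}$; crucially, for $n\ge 2$ no singleton-support map is affine, so the additive closure genuinely does the remaining work.

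The final step is the closure computation, which I expect to be the main obstacle, since it must simultaneously produce every singleton-support map and rule out any map whose support size lies strictly between $1$ and $n$, or exceeds $n$ without being constant. The governing phenomenon is the column-matching rule of $B_n$: the sum $(p,q;\sigma)+(p',q';\sigma')$ annihilates to $\xi_\vartheta$ when $p\ne p'$, and when $p=p'$ only the single argument $(i_0,p)$ with $i_0=q(\sigma')^{-1}$ survives, yielding the singleton-support map $\sis{(i_0,p)}{(i_0\sigma,q')}$. Choosing $\sigma,\sigma',q$ appropriately realizes every $\sis{(k,l)}{(r,s)}$, so all $n^4$ singleton maps appear. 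I would then verify that the candidate set $\mathcal{S}=\mathcal{C}_{B_n}\cup\{\text{singleton maps}\}\cup\{(p,q;\sigma)\}$ is closed under each remaining type of sum (constant $+$ anything, singleton $+$ anything, $n$-support $+$ constant, and so on), every case reducing by the same matching rule to an element of $\mathcal{S}$. This yields both inclusions: $\mathcal{S}$ is closed and contains $\mathrm{Aff}(B_n)$, so $A^+(B_n)\subseteq\mathcal{S}$, while each element of $\mathcal{S}$ is an affine map or a sum of two affine maps, so $\mathcal{S}\subseteq A^+(B_n)$. Finally, for $n\ge 2$ the three families have supports of sizes $0$ or $n^2+1$, exactly $1$, and exactly $n$, hence are pairwise disjoint, and summing their cardinalities gives $n^4+(n!+1)n^2+1$.
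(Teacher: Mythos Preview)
The paper does not prove this theorem at all: it is quoted from \cite{a.jk13} in the preliminaries section as background, so there is no in-paper argument to compare against. That said, your outline is a correct and complete strategy, and it is the natural one (and, as far as one can tell, essentially the one carried out in the cited source): classify $\mathrm{End}(B_n,+)$, add constants to obtain $\mathrm{Aff}(B_n)=\mathcal{C}_{B_n}\cup\{(p,q;\sigma)\}$, and then verify that the candidate set $\mathcal{S}$ is closed under pointwise addition while every singleton-support map arises as a sum of two $n$-support maps. Your key computations---that $g_\sigma+\xi_{(a,b)}=(a\sigma^{-1},b;\sigma)$, that $(p,q;\sigma)+(p,q';\sigma')$ collapses to the single survivor $\sis{(q(\sigma')^{-1},p)}{(q(\sigma')^{-1}\sigma,\,q')}$, and that the three families are disjoint by support size for $n\ge 2$---are all correct, and the remaining case checks for closure go through exactly as you indicate.
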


In this work, we investigate the syntactic semigroup problem for the semigroup reducts of $A^+(B_n)$. In what follows, $A^+(B_n)^{^+}$ denotes the additive  semigroup reduct $(A^+(B_n), +)$ and $A^+(B_n)^{^\circ}$ denotes the multiplicative semigroup reduct $(A^+(B_n), \circ)$ of the affine near-semiring $(A^+(B_n), +, \circ)$.

\section{The semigroup $A^+(B_n)^{^+}$}

In this section, we prove that $A^+(B_n)^{^+}$ is a syntactic semigroup of a star-free language. First we show that the monoid $A^+(B_1)^{^+}$ is syntactic by giving an isomorphism between $A^+(B_1)^{^+}$ and the syntactic monoid of a language over a set of three symbols. Then we proceed to the case for $n \ge 2$ in Theorem  \ref{6.t.a+bn+.sc}.

\begin{theorem}\label{6.t.b1+}
The monoid $A^+(B_1)^{^+}$ is a syntactic monoid of a star-free language.
\end{theorem}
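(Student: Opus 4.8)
The plan is to first understand the structure of $A^+(B_1)^{^+}$ explicitly, since the statement concerns the smallest case $n=1$. By the remark preceding the theorem, $A^+(B_1) = \{(1,1;id)\} \cup \mathcal{C}_{B_1}$, and $B_1 = \{(1,1), \vartheta\}$ has just two elements, so there are exactly two constant maps $\xi_{(1,1)}$ and $\xi_\vartheta$. Thus $A^+(B_1)^{^+}$ is a three-element monoid under pointwise addition. The first step is to write down its addition table completely: the map $(1,1;id)$ acts as the identity (it sends $(1,1)$ to $(1,1)$ and $\vartheta$ to $\vartheta$), the constant map $\xi_\vartheta$ is a zero element since $\vartheta$ is the zero of $B_1$, and $\xi_{(1,1)}$ is the remaining idempotent-or-nilpotent element whose sums I would compute directly from Definition~\ref{d.bs}.

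Having fixed the concrete three-element monoid $M = A^+(B_1)^{^+}$, the approach is to invoke Theorem~\ref{t.syn-min}: it suffices to exhibit a star-free language $L$ over a suitable alphabet whose minimal automaton has transition monoid isomorphic to $M$. Following the sentence before the theorem, I would take the alphabet $\Sigma$ to have three symbols, assign to each element of $M$ a symbol (or rather realize each element of $M$ as the transition function induced by a letter), and construct an automaton $\mathcal{A}$ on the state set $B_1 = \{(1,1), \vartheta\}$ so that the action of each input letter reproduces the corresponding additive map. Concretely, I would let the letters act on the two states exactly as the three maps $(1,1;id)$, $\xi_{(1,1)}$, $\xi_\vartheta$ act on $B_1$, choose the initial and final states so that the automaton is minimal (the two states must be Nerode-inequivalent, which is arranged by making one state final and the other non-final), and then verify that the transition monoid $\mathscr{T}(\mathcal{A})$ is precisely $M$.

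The final step is to check two things: that the constructed automaton is genuinely minimal so that Theorem~\ref{t.syn-min} applies, and that the accepted language $L(\mathcal{A})$ is star-free. Minimality reduces to confirming the two states are distinguishable and reachable, which is immediate for a two-state automaton with one final and one non-final reachable state. For star-freeness I would appeal to Theorem~\ref{6.t.schu}: since $M$ is a three-element aperiodic monoid (it contains no nontrivial subgroup — one checks the only idempotents generate no cyclic group of order $>1$), the syntactic semigroup is aperiodic and hence $L$ is star-free.

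The main obstacle I anticipate is purely bookkeeping: getting the letter-to-map correspondence right so that composition of transition functions in $\mathscr{T}(\mathcal{A})$ matches \emph{addition} in $A^+(B_1)^{^+}$, since the automaton composes transitions multiplicatively while the target monoid is written additively. The subtlety is that the transition monoid is generated by the letter-actions under composition, so I must ensure the three chosen letter-actions, together with their composites and the identity $f_\varepsilon$, yield exactly the three elements of $M$ and no more — in particular that no new maps arise from composing the letter-actions. Once the multiplication table of $\mathscr{T}(\mathcal{A})$ is seen to coincide with the addition table of $M$, the isomorphism and hence the theorem follow.
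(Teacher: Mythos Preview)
Your plan has a genuine structural obstruction, not merely a bookkeeping issue. You propose a two-state automaton on $B_1=\{(1,1),\vartheta\}$ whose letters act as the three maps $(1,1;id)$, $\xi_{(1,1)}$, $\xi_\vartheta$, and you hope its transition monoid (under composition) coincides with $A^+(B_1)^{^+}$. But compute the additive table first: one finds that $\xi_{(1,1)}$ is the \emph{additive} identity, $\xi_\vartheta$ is a zero, and $(1,1;id)$ is a third idempotent, so $A^+(B_1)^{^+}$ is the three-element \emph{commutative} chain semilattice. On the other hand, the transition monoid of any two-state automaton is a submonoid of the full transformation monoid $T_2$, and the only three-element submonoid of $T_2$ is $\{\mathrm{id},c_1,c_2\}$ (identity together with the two constants), which is a right-zero semigroup with adjoined identity and hence non-commutative. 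Thus no two-state automaton can realise $A^+(B_1)^{^+}$ as its transition monoid. In fact, composing your three letter-actions on $B_1$ produces exactly the \emph{multiplicative} reduct $A^+(B_1)^{^\circ}$, not the additive one; the obstacle you flagged is fatal rather than cosmetic.

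The paper avoids this by passing to a three-state automaton: over $\Sigma=\{a,b,c\}$ it takes the language $L_b$ of words in $\{a,b\}^*$ containing at least one $b$, regarded as a language over $\Sigma$, so that a third ``sink'' state absorbs any word containing $c$. The resulting minimal automaton has transition monoid equal to the three-element semilattice, with $f_a\mapsto\xi_{(1,1)}$, $f_b\mapsto(1,1;id)$, $f_c\mapsto\xi_\vartheta$ giving the isomorphism. Your final step (aperiodicity via idempotence, hence star-freeness by Theorem~\ref{6.t.schu}) is correct and matches the paper's argument.
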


\begin{proof}
Let $\Sigma = \{a, b, c\}$ and consider the language $L_b = \{x \in \{a, b\}^* \mid x \; \text{has at least one} \;  b\}$ over $\Sigma$. The language $L_b$ is recognizable. For instance, the automaton $\mathcal{A}_b$ given in Figure \ref{6.f.al1} accepts the language $L_b$.

\begin{figure}[h]
\entrymodifiers={++[o][F-]} \SelectTips{cm}{}
\[\xymatrix{*\txt{}\ar[r] & {q_0} \ar@(l, u)^a \ar@/_1pc/[dr]^c \ar@/^1.2pc/[rr]^b & *\txt{} & *++[o][F=]{q_1} \ar@(u,r)^{a, \; b } \ar@/^1pc/[dl]^c \\
*\txt{} & *\txt{} & q_2 \ar@(l,d)[]_{a, \; b, \; c} & *\txt{}}\]
\caption{The automation $\mathcal{A}_b$ accepting the language $L_b$}\label{6.f.al1}
\end{figure}
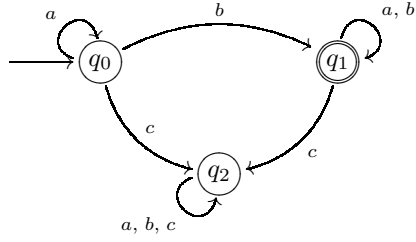

It can be easily ascertained that the automaton $\mathcal{A}_b$ is minimal (refer \cite{b.lawson} for some standard methods). By Theorem \ref{t.syn-min}, the transition monoid $\mathscr{T}(\mathcal{A}_b)$, given in Table \ref{ct+}, is syntactic.

\begin{table}[h]
\centering
\begin{tabular}{l|lll}
  & $f_a$ & $f_b$ & $f_c$\\
\hline
$f_a$ & $f_a$ & $f_b$ & $f_c$\\
$f_b$ & $f_b$ & $f_b$ & $f_c$\\
$f_c$ & $f_c$ & $f_c$ & $f_c$\\
\end{tabular}
\caption{The Cayley table for $\mathscr{T}(\mathcal{A}_b)$}\label{ct+}
\end{table}

Further, observe that the mapping $\phi : \mathscr{T}(\mathcal{A}_b) \longrightarrow A^+(B_1)^{^+}$ which assigns \[f_a \mapsto\xi_{(1, 1)} ,f_b \mapsto (1, 1; id)\;\; \text{and}\;\;  f_c \mapsto \xi_\vartheta\] is an isomorphism. Hence, the monoid $A^+(B_1)^{^+}$ is syntactic monoid. Since every element of the monoid $A^+(B_1)^{^+}$ is idempotent,  $A^+(B_1)^{^+}$ is aperiodic. Hence, by Theorem \ref{6.t.schu}, the language $L_b$ is star-free. Indeed, a star-free expression for the language $L_b$ is given by \[(\varnothing^\complement c \varnothing^\complement)^\complement b(\varnothing^\complement c \varnothing^\complement)^\complement.\]
\end{proof}

Now, for $n \ge 2$, we show that $A^+(B_n)^{^+}$ is isomorphic to the syntactic semigroup of some language over ${\rm Aff}(B_n)$.

\begin{theorem}\label{6.t.a+bn+.sc}
For $n \ge 2$, $A^+(B_n)^{^+}$ is a syntactic semigroup of a star-free language.
\end{theorem}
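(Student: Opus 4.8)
The plan is to exhibit a \emph{disjunctive subset} of $S := A^+(B_n)^{^+}$, conclude via Theorem \ref{t.disj.a+bnc} that $S$ is the syntactic semigroup of a recognizable language, and then strengthen ``recognizable'' to ``star-free'' by checking that $S$ is aperiodic and applying Theorem \ref{6.t.schu}. For a concrete witness I would take the alphabet $\Sigma = {\rm Aff}(B_n)$ together with the canonical surjection $\eta : \Sigma^+ \to S$, which exists because ${\rm Aff}(B_n)$ additively generates $A^+(B_n)$, and set $L = \eta^{-1}(D)$ for the disjunctive set $D$ constructed below; since $\approx_D$ is then the equality relation, $L$ has $S$ as its syntactic semigroup.

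The argument rests on two features of the pointwise addition. First, a context $f \mapsto u + f + v$ can only \emph{shrink} supports: because $\vartheta$ is absorbing in $B_n$, $x(u+f+v) = xu + xf + xv = \vartheta$ whenever $xf = \vartheta$, so ${\rm supp}(u+f+v) \subseteq {\rm supp}(f)$ and the support of an element, as a subset of $B_n$, can never be moved by a context. Second, a two-sided Brandt translation is a \emph{value detector}: for $\alpha = (a_1,a_2)$ and $\beta = (b_1,b_2)$ one has $\alpha + t + \beta = (a_1,b_2)$ if $t = (a_2,b_1)$ and $\vartheta$ otherwise. Hence the constant context $u = \xi_\alpha,\ v = \xi_\beta$ turns $f$ into the map with constant value $(a_1,b_2)$ on the fibre $f^{-1}\big((a_2,b_1)\big)$ and $\vartheta$ elsewhere, where both the target $(a_2,b_1)$ and the output $(a_1,b_2)$ are free. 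Using Theorem \ref{2.t.ce.a+bn} one checks that such a fibre is a singleton exactly when $f$ is a singleton-support map of the prescribed value, or an $n$-support map $(p,q;\sigma)$ whose output second coordinate matches the target; dually, a singleton context $u = \sis{(k,l)}{\gamma},\ v = \xi_\beta$ collapses a constant map to a singleton-support map at location $(k,l)$.

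With these tools I would verify that
\[ D = \{\, \sis{(k,l)}{(k,l)} \mid k, l \in [n] \,\} \]
(the singleton-support maps whose value equals their location) is disjunctive. Given $f \ne g$, choose $x_0$ with $x_0 f \ne x_0 g$ and separate by the types (constant, singleton-support, $n$-support) of $f$ and $g$. A constant context tuned to the value $x_0 f$ reduces $f$ to a singleton-support map whose location records $x_0$ (or the relevant permutation datum); choosing the output to equal that location places $u+f+v$ in $D$, while, since supports cannot be moved and values are detected exactly, the same context sends $g$ outside $D$. The choice ``value $=$ location'' is what makes this simultaneous: one context both puts $f$ in $D$ and keeps $g$ out precisely because their locations differ. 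The delicate case is $f=(p,q;\sigma)$ and $g=(p,q;\sigma')$ with $\sigma\neq\sigma'$: picking $i$ with $i\sigma\neq i\sigma'$ and targeting the value $(i\sigma,q)$ sends $f$ and $g$ to singleton-support maps at the distinct locations $(i,p)$ and $(i\sigma\sigma'^{-1},p)$, which $D$ distinguishes. This permutation-detecting case, together with the bookkeeping over all ordered type-pairs, is where the real work lies and is the step I expect to be the main obstacle.

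Finally, $S$ is aperiodic: a short computation gives that $f+f$ is idempotent for every $f$ (indeed $\xi_c + \xi_c = \xi_{c+c}$, a singleton-support map squares to itself or to $\xi_\vartheta$, and $(p,q;\sigma)+(p,q;\sigma) = \sis{(q\sigma^{-1},p)}{(q,q)}$), so $f+f+f = f+f$ for all $f$ and $S$ has only trivial subgroups. Combining the disjunctive subset (Theorem \ref{t.disj.a+bnc}) with aperiodicity (Theorem \ref{6.t.schu}) shows that $L$ is a star-free language whose syntactic semigroup is $A^+(B_n)^{^+}$, which proves the theorem.
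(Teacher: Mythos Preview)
Your plan is correct and follows the same overall strategy as the paper: exhibit a disjunctive subset of $A^+(B_n)^{^+}$, pull it back along the canonical surjection $\Sigma^+\to A^+(B_n)^{^+}$ with $\Sigma={\rm Aff}(B_n)$ to obtain a recognizable language whose syntactic semigroup is $A^+(B_n)^{^+}$, and then invoke aperiodicity for star-freeness. The paper phrases the first step as proving directly that $\approx_L=\ker\varphi$ for $L=\varphi^{-1}(P)$, which is exactly the statement that $P$ is disjunctive.

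The only genuine difference is the choice of witness. The paper takes
\[
P=\{\xi_{(1,2)}\}\cup\{\sis{(k,l)}{(1,1)}\mid k,l\in[n]\},
\]
mixing one full-support constant with singleton-support maps of fixed value $(1,1)$; this lets the constant--constant comparison be handled by a purely constant context landing on $\xi_{(1,2)}$. Your set
\[
D=\{\sis{(k,l)}{(k,l)}\mid k,l\in[n]\}
\]
is more uniform and exploits the ``value $=$ location'' coupling, which makes the permutation case $(p,q;\sigma)$ versus $(p,q;\sigma')$ particularly transparent. The trade-off is that a constant context sends a full-support constant to a constant, never to an element of $D$, so for $\xi_{(p,q)}$ versus $\xi_{(r,s)}$ you must switch to a singleton-support left context (as you note in your ``dually'' remark), whereas the paper's $P$ handles that case with constant contexts alone. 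Your observation that ${\rm supp}(u+f+v)\subseteq{\rm supp}(f)$ and the Brandt ``value detector'' are exactly the mechanisms the paper's case analysis relies on implicitly. The aperiodicity argument via $f+f+f=f+f$ is a direct verification of what the paper imports from \cite[Proposition~3.10]{a.jk13}. In short: same architecture, different disjunctive set; your $D$ is a clean alternative witness and the sketch, once the type-pair bookkeeping is filled in as you outline, goes through.
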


\begin{proof}

Let $\Sigma = {\rm Aff}(B_n)$. For $x = f_1f_2 \cdots f_k \in \Sigma^+$, write $\hat{x} = f_1 + f_2 + \cdots + f_k $. Clearly, $\hat{x} \in A^+(B_n)$. For $x, y \in \Sigma^+$, we have $\widehat{xy} = \hat{x} + \hat{y}$. Note that the function  \[\varphi : \Sigma^+ \longrightarrow A^+(B_n)^{^+}\] defined by $x \varphi = \hat x$ is an onto homomorphism.
Now for $P = \{\xi_{(1, 2)}\} \cup \left\{\!\sis{(k, l)}{(1, 1)} \; | \; k, l \in [n] \right\}$, consider the language \[L = \{x \in \Sigma^+ \; | \; \hat{x} \in P\}.\] We characterize the syntactic congruence of the language $L$ in the following claim.

\textit{Claim: For $x, y \in \Sigma^+$, $ x  \; \approx_L \; y$ if and only if $\hat{x} = \hat{y}$}.

\textit{Proof.}$(: \Leftarrow)$ Suppose $\hat x = \hat y$.  For $u, v \in \Sigma^+$,
\begin{eqnarray*}
uxv \in L  \Longleftrightarrow \widehat{uxv}  \in P  &\Longleftrightarrow& \hat{u} + \hat{x} + \hat{v} \in P  \\
     &\Longleftrightarrow&  \hat{u} + \hat{y} + \hat{v} \in P \\
     &\Longleftrightarrow& \widehat{uyv}  \in P \Longleftrightarrow  uyv \in L.
\end{eqnarray*} Hence, $x \; \approx_L \; y$.

$(\Rightarrow :)$ Suppose $\hat{x} \ne \hat{y}$. We claim that there exist $u$ and $v$ in $\Sigma^+$ such that $\hat{u} + \hat{x} + \hat{v} \in P$, whereas $\hat{u} + \hat{y} + \hat{v} \not\in P$. Consequently, $uxv \in L$ but $uyv \not\in L$ so that  $x \not\approx_L y$. In view of Theorem \ref{2.t.ce.a+bn}, we prove our claim in the following cases.

\begin{description}
  \item[{\rm\em Case 1}] $\hat{x}$ is of full support, say $\xi_{(p, q)}$. Choose $u, v \in \Sigma^+$  such that $\hat{u} = \xi_{(1, p)}$ and $\hat{v} = \xi_{(q, 2)}$. Note that  \[\hat{u} + \hat{x} + \hat{v} = \xi_{(1, p)} + \xi_{(p, q)} + \xi_{(q, 2)}  = \xi_{(1, 2)} \in P.\]
        However, as shown below, $\hat{u} + \hat{y} + \hat{v} \not\in P$, for any possibility of $\hat y \in A^+(B_n)$.
      \begin{description}
    \item[{\rm\em Subcase 1.1}] If $\hat{y} = \xi_{\vartheta}$, then clearly $\hat{u} + \hat{y} + \hat{v} = \xi_{\vartheta} \not\in P$.

    \item[{\rm\em Subcase 1.2}] Let $\hat{y}$ be of full support, say $\xi_{(r, s)}$. In either of the cases,  $p \ne r$ or $q \ne s$, we can clearly observe that\[\hat{u} + \hat{y} + \hat{v}  = \xi_{(1, p)} + \xi_{(r, s)} + \xi_{(q, 2)} = \xi_{\vartheta}\not\in P.\]

  \item[{\rm\em Subcase 1.3}]  Let $\hat{y}$ be of $n$-support, say $(k, l; \sigma)$. Let $j \in [n]$ such that $j \sigma = p$. Note that
  \[\hat{u} + \hat{y} + \hat{v} = \xi_{(1, p)} + (k, l; \sigma) +  \xi_{(q, 2)}=
                \left\{\begin{array}{cl}
                \xi_{\vartheta} & \text {if $l \ne q$;}  \\
                \sis{(j, k)}{(1, 2)}   & \text {if $l = q$.}
                  \end{array}\right. \]

  \item[{\rm\em Subcase 1.4}] Let $\hat{y}$ be of singleton support, say $\sis{(k, l)}{(r, s)}$. Note that
\[\hat{u} + \hat{y} + \hat{v} = \xi_{(1, p)} + \;\sis{(k, l)}{(r, s)} +  \xi_{(q, 2)} =
                \left\{\begin{array}{cl}
                \xi_{\vartheta} & \text {if $p \ne r$ or $q \ne s$;} \\
                \sis{(k, l)}{(1, 2)} & \text {otherwise.}

                  \end{array}\right. \]

\end{description}
  \item[{\rm\em Case 2}] $\hat{x}$ is of $n$-support, say $(p,q; \sigma)$.
  \begin{description}

  \item[{\rm\em Subcase 2.1}]  Let $\hat{y}$ be of $n$-support, say $(k, l; \tau)$. Choose $u, v \in \Sigma^+$ such that $\hat{u} = \;\sis{(l, p)}{(1, l\sigma)}$ and $\hat{v} = \xi_{(q, 1)}$. Note that \[\hat{u} + \hat x + \hat{v} =  \;\sis{(l, p)}{(1, l\sigma)} + (p,q; \sigma) + \xi_{(q, 1)} = \;\sis{(l, p)}{(1, 1)} \in P.\]
      If $p \ne k$ or $q \ne l$, we have
   \[\hat{u} + \hat{y} + \hat{v} = \;\sis{(l, p)}{(1, l\sigma)} + (k, l; \tau) + \xi_{(q, 1)} = \xi_{\vartheta} \not\in P.\] Otherwise, we have $\sigma \ne \tau$. there exists $j_0 \in [n]$ such that $j_0 \sigma \ne j_0 \tau$. Now choose $u, v \in \Sigma^+$ such that $\hat{u} = \;\sis{(j_0, p)}{(1, j_0 \sigma)}$ and $\hat{v} = \xi_{(q, 1)}$. Note that $\hat{u} + \hat{x} + \hat{v}= \;\sis{(j_0, p)}{(1, 1)} \in P$, whereas \[\hat{u} + \hat{y} + \hat{v} = \;\sis{(j_0, p)}{(1, j_0 \sigma)} + (p,q; \tau) + \xi_{(q, 1)} = \xi_{\vartheta} \not\in P.\]

 \item[{\rm\em Subcase 2.2}] Let $\hat{y}$ be of singleton support, say $\sis{(j, k)}{(m, r)}$. Choose $u, v \in \Sigma^+$ such that $\hat{u} = \;\sis{(l, p)}{(1, l\sigma)}$ (with $l\sigma \ne m$) and $\hat{v} = \xi_{(q, 1)}$. Note that \[\hat{u} + \hat{x} + \hat{v} = \;\sis{(l, p)}{(1, 1)}\in P,\] whereas  \[\hat{u} + \hat{y} + \hat{v} = \;\sis{(l, p)}{(1, l\sigma)}+ \;\sis{(j, k)}{(m, r)} + \xi_{(q, 1)} = \xi_{\vartheta} \not\in P.\]

\item[{\rm\em Subcase 2.3}] If $\hat{y} = \xi_{\vartheta}$, then, for $\hat{u} = \;\sis{(l, p)}{(1, l\sigma)}$ and $\hat{v} = \xi_{(q, 1)}$  we have $\hat{u} + \hat x + \hat{v} = \;\sis{(l, p)}{(1, 1)} \in P$, but $\hat{u} + \hat{y} + \hat{v} = \xi_{\vartheta} \not\in P$.
 \end{description}

\item[{\rm\em Case 3}] $\hat{x}$ is of singleton support, say $\sis{(p, q)}{(r, s)}$. Choose $u, v \in \Sigma^+$ such that $\hat{u} = \;\sis{(p, q)}{(1, r)}$ and $\hat{v} = \;\sis{(p, q)}{(s, 1)}$ so that
      \[\hat{u} + \hat{x} + \hat{v} = \;\sis{(p, q)}{(1, r)} +  \; \sis{(p, q)}{(r, s)}+ \;\sis{(p, q)}{(s, 1)} = \;\sis{(p, q)}{(1, 1)}\in P.\]

\begin{description}
\item[{\rm\em Subcase 3.1}] $\hat{y}$ is of singleton support, say $\sis{(j, k)}{(l, m)}$. Since $\hat{x} \ne \hat{y}$, we have \[\hat{u} + \hat{y} + \hat{v} = \;\sis{(p, q)}{(1, r)} +  \;\sis{(j, k)}{(l, m)} + \;\sis{(p, q)}{(s, 1)}= \xi_{\vartheta}\not\in P.\]

\item[{\rm\em Subcase 3.2}] If $\hat{y} = \xi_{\vartheta}$, then clearly $\hat{u} + \hat{y} + \hat{v} = \xi_{\vartheta} \not\in P$.
\end{description}
\end{description}

Thus, we have proved that, for $x, y \in \Sigma^+$, $ x  \; \approx_L \; y$ if and only if $\hat{x} = \hat{y}$. Consequently, $\ker \varphi = \; \approx_L$. By fundamental theorem of homomorphisms of semigroups, we have $\Sigma^+/_{\approx_L}$ is isomorphic to $A^+(B_n)^{^+}$. Hence, $A^+(B_n)^{^+}$ is a syntactic semigroup of the language $L$. However, since the semigroup $A^+(B_n)^{^+}$ is aperiodic (cf. \cite[Proposition 3.10]{a.jk13}), by Theorem \ref{6.t.schu}, the language $L$ is star-free.
\end{proof}

\section{The semigroup $A^+(B_n)^{^\circ}$}

In order to ascertain that the semigroup $A^+(B_n)^{^\circ}$ is syntactic, in this section, first we show that the monoid $A^+(B_1)^{^\circ}$ is syntactic. Then, for $n \ge 2$, we show that the semigroup $A^+(B_n)^{^\circ}$ is syntactic by constructing a disjunctive subset.

\begin{theorem}\label{6.t.b1c}
The monoid $A^+(B_1)^{^\circ}$ is a syntactic monoid of a star-free language.
\end{theorem}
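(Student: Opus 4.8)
The plan is to mirror the structure of the proof of Theorem~\ref{6.t.b1+}: identify a concrete recognizable language $L$ over a small alphabet whose minimal automaton has transition monoid isomorphic to $A^+(B_1)^{^\circ}$, and then invoke Theorem~\ref{t.syn-min} to conclude that $A^+(B_1)^{^\circ}$ is syntactic. First I would determine the multiplicative structure of $A^+(B_1)^{^\circ}$ explicitly. Recall that $A^+(B_1) = \{(1,1;id)\} \cup \mathcal{C}_{B_1}$, and $\mathcal{C}_{B_1} = \{\xi_{(1,1)}, \xi_\vartheta\}$, so $A^+(B_1)$ has exactly three elements: the identity map $(1,1;id)$, the constant map $\xi_{(1,1)}$, and the zero constant map $\xi_\vartheta$. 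Under composition $\circ$, the map $(1,1;id)$ acts as the two-sided identity; composing with a constant map $\xi_c$ on the right yields $\xi_c$ (since $x(f\xi_c) = (xf)\xi_c = c$), and composing $\xi_c$ on the left with any $g$ gives $\xi_{cg}$. I would compute the full Cayley table to see that $(1,1;id)$ is the identity, $\xi_\vartheta$ is a zero element (as $\vartheta$ is the zero of $B_1$ and $\vartheta g = \vartheta$ for every $g$), and $\xi_{(1,1)}$ is an idempotent satisfying $\xi_{(1,1)} \circ (1,1;id) = \xi_{(1,1)}$ and $(1,1;id)\circ \xi_{(1,1)} = \xi_{(1,1)}$, while $\xi_{(1,1)} \circ \xi_\vartheta = \xi_\vartheta$ etc.

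Next I would exhibit a recognizable language $L$ over an alphabet $\Sigma$ of three symbols $\{a,b,c\}$ whose minimal automaton realizes precisely this three-element transition monoid. Following the template of Theorem~\ref{6.t.b1+}, I would assign the identity symbol, a symbol acting as the idempotent constant $\xi_{(1,1)}$, and a symbol acting as the zero $\xi_\vartheta$, then pick a language (for example, words containing no occurrence of the zero-symbol, or a suitable acceptance condition on a three-state automaton) so that the induced transition functions $f_a, f_b, f_c$ generate exactly the monoid above. The concrete choice would be a three-state automaton $\mathcal{A}$ with a distinguished trap state for the zero element; I would then verify minimality by the standard state-distinguishability argument (as is done in Theorem~\ref{6.t.b1+} citing \cite{b.lawson}), so that Theorem~\ref{t.syn-min} applies and $\mathscr{T}(\mathcal{A}) \cong A^+(B_1)^{^\circ}$ is syntactic.

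Finally, to upgrade the conclusion to \emph{star-free}, I would check aperiodicity of $A^+(B_1)^{^\circ}$. Since the monoid has three elements with $(1,1;id)$ the identity, $\xi_\vartheta$ a zero, and $\xi_{(1,1)}$ idempotent, every element satisfies $s^2 = s$ (or is the identity), so the monoid is aperiodic; then Theorem~\ref{6.t.schu} yields that $L$ is star-free, and I would record an explicit star-free expression analogous to the one given for $L_b$. The main obstacle I anticipate is not conceptual but combinatorial: one must choose the language $L$ and its automaton so that the three generating transition maps compose to give exactly the multiplicative Cayley table of $A^+(B_1)^{^\circ}$ and no extra elements, i.e.\ so that the transition monoid is genuinely isomorphic rather than merely a homomorphic image or an overgroup. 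Getting the acceptance condition and the transition function to match the composition rule $x(f\circ g) = (xf)g$ precisely — in particular arranging which symbol acts as the idempotent versus the zero — is the delicate point, but once the table is matched the remaining verifications (minimality and aperiodicity) are routine.
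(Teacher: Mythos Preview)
Your overall strategy---exhibit a minimal automaton whose transition monoid is $A^+(B_1)^{^\circ}$, then invoke Theorems~\ref{t.syn-min} and~\ref{6.t.schu}---matches the paper exactly, and your aperiodicity check is fine. But there is a genuine error in your computation of the multiplicative table that would derail the construction you sketch. You assert that $\xi_\vartheta$ is a (two-sided) zero because ``$\vartheta g = \vartheta$ for every $g$''; this is false. The element $\vartheta$ is the zero of $(B_1,+)$, but $g$ here is an arbitrary self-map of $B_1$, not addition by something: for $g = \xi_{(1,1)}$ one has $\vartheta\,\xi_{(1,1)} = (1,1)$, and hence $\xi_\vartheta \circ \xi_{(1,1)} = \xi_{(1,1)} \ne \xi_\vartheta$. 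In fact both constant maps $\xi_{(1,1)}$ and $\xi_\vartheta$ are \emph{right} zeros (since $f \circ \xi_c = \xi_c$ always) but neither is a left zero; $A^+(B_1)^{^\circ}$ is the monoid obtained by adjoining an identity to a two-element right-zero semigroup, and it has no two-sided zero at all.

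This matters for the automaton you propose. A three-state automaton with a trap state and a letter sending every state to the trap yields a transition function that \emph{is} a two-sided zero, so its transition monoid cannot be isomorphic to $A^+(B_1)^{^\circ}$. Your alternative suggestion (words avoiding a fixed letter) gives only a two-element transition monoid. The paper instead uses a \emph{two}-state automaton with no trap: the language $L_a = \{xab^m \mid x \in \Sigma^*,\ m \ge 0\}$ over $\Sigma = \{a,b,c\}$, in which $f_b$ is the identity and $f_a, f_c$ are the two constant maps onto the two states---precisely a right-zero pair with an adjoined identity. Once you correct the Cayley table you will see this is the natural model, and the remainder of your argument (minimality, the explicit isomorphism, aperiodicity, and a star-free expression) goes through unchanged.
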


\begin{proof}
Let $\Sigma = \{a, b, c\}$ and consider the language $L_a = \{xab^n \mid x \in \Sigma^* \; \text{and} \;  n \ge 0\}$ over $\Sigma$. The language $L_a$ is recognizable. For instance, the automaton $\mathcal{A}_a$ given in Figure \ref{6.f.al1c} is a minimal automaton accepting the language $L_a$. By Theorem \ref{t.syn-min}, the transition monoid $\mathscr{T}(\mathcal{A}_a)$, given in Table \ref{ctc}, is a syntactic monoid.

\begin{figure}[htbp]
\entrymodifiers={++[o][F-]} \SelectTips{cm}{}
\[\xymatrix{ *\txt{} \ar[r] &  q_0 \ar@(l,u)^{b, \; c} \ar@/^1pc/[rr]^a & *\txt{} & *++[o][F=]{q_1} \ar@(u,r)^{a, \; b } \ar@/^1pc/[ll]^c } \]

\caption{The automation $\mathcal{A}_a$ accepting the language $L_a$}\label{6.f.al1c}
\end{figure}
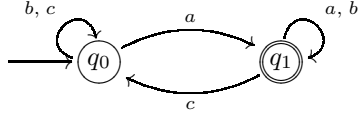

\begin{table}[h]
\centering

\begin{tabular}{l|lll}
  & $f_a$ & $f_b$ & $f_c$\\
\hline
$f_a$ & $f_a$ & $f_a$ & $f_c$\\
$f_b$ & $f_a$ & $f_b$ & $f_c$\\
$f_c$ & $f_a$ & $f_c$ & $f_c$\\
\end{tabular}
\caption{The Cayley table for $\mathscr{T}(\mathcal{A}_a)$}\label{ctc}
\end{table}

Further, it is easy to check that the mapping $\phi : \mathscr{T}(\mathcal{A}_a) \longrightarrow A^+(B_1)^{^\circ}$ which assigns $f_a \mapsto\xi_{(1, 1)} ,f_b \mapsto (1, 1; id)$ and $f_c \mapsto \xi_\vartheta$ is an isomorphism. Hence, the monoid $A^+(B_1)^{^\circ}$ is a syntactic monoid.
Since every element of the semigroup $A^+(B_1)^{^\circ}$ is idempotent, $A^+(B_1)^{^\circ}$ is aperiodic so that the language $L_a$ is star-free. A star-free expression for the language $L_a$ is given by \[\varnothing^\complement a (\varnothing^\complement(a + c)\varnothing^\complement)^\complement.\]
\end{proof}

\begin{theorem}\label{6.t.smbnc}
The set $D = \{(1, 1; id)\} \cup \{\xi_{(p, q)}\; | \; p, q \in [n]\}$ is a disjunctive subset of the semigroup $A^+(B_n)^{^\circ}$.
Hence, $A^+(B_n)^{^\circ}$ is a syntactic semigroup.
\end{theorem}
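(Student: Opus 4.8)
The plan is to apply Theorem \ref{t.disj.a+bnc}: since $A^+(B_n)^{^\circ}$ is finite, it suffices to prove that $D$ is disjunctive, that is, that the syntactic congruence $\approx_D$ is the equality relation. So I would fix two distinct elements $f, g \in A^+(B_n)^{^\circ}$ and produce a context $(u, v)$ with $u, v \in A^+(B_n)$ such that exactly one of $ufv$ and $ugv$ lies in $D$. Note that $A^+(B_n)^{^\circ}$ has no identity for $n \ge 2$ (the full identity map is not affine), so both $u$ and $v$ must be genuine elements and the separation cannot rely on an empty context. The argument rests on the composition rules from \cite{a.jk13}, of which three facts do almost all the work: every constant absorbs on the left, so $u\xi_c = \xi_c$ and hence $u\xi_c v = \xi_{cv}$ depends only on $c$ and $v$; one has $(p, q; \sigma)(r, s; \tau) = (p, s; \sigma\tau)$ when $q = r$ and $\xi_\vartheta$ otherwise; and $(1, p; id)\,(p, q; \sigma)\,(q, 1; \sigma^{-1}) = (1, 1; id)$, so any $n$-support map can be \emph{standardized} into the distinguished element of $D$.

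Using Theorem \ref{2.t.ce.a+bn}, I would split the argument according to the types (constant, singleton support, $n$-support) of $f$ and $g$. The cleanest cases are those in which neither product is forced into $D$: two distinct $n$-support maps $(p,q;\sigma)$ and $(p',q';\sigma')$ are separated by $u = (1, p; id)$ and $v = (q, 1; \sigma^{-1})$, since then $ufv = (1,1;id) \in D$ while $ugv \in D$ would force $(p',q',\sigma') = (p,q,\sigma)$; an $n$-support map against a singleton is separated by the same standardizing context, because composing a singleton through it yields again a singleton or $\xi_\vartheta$, neither in $D$. For constant arguments the left factor is irrelevant, so after pre-composing a singleton $\sis{(k,l)}{(p,q)}$ by the constant $\xi_{(k,l)}$ to turn it into $\xi_{(p,q)}$, the separation of constants from constants and of singletons from each other reduces to choosing a right factor $v$ that annihilates one target point but not the other: when the two points lie in different columns an $n$-support $v$ selecting a single column does this, and when they share a column I would use a singleton right factor such as $\sis{(p,q)}{(1,1)}$ to isolate one of them.

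The main obstacle is the interaction between the $n$-support maps and the constants, because the distinguished element $(1,1;id)$ and all the constants $\xi_{(p,q)}$ simultaneously belong to $D$: the standardizing context that places $(p,q;\sigma)$ at $(1,1;id)$ will, when applied instead to a constant $\xi_{(i,q)}$ whose column matches $q$, produce $\xi_{(i\sigma^{-1},1)}$, which also lies in $D$. A single standardizing context is genuinely insufficient here, and this is the crux of the proof. I would resolve it by reversing roles: keep the constant inside $D$ and eject the $n$-support map, taking $v = (q, 1; id)$ so that $\xi_{(i,q)}v = \xi_{(i,1)} \in D$, and $u = (2, p; id)$ so that $u(p,q;\sigma)v = (2, 1; \sigma) \notin D$ by its first coordinate. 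This step, together with the analogous treatment of $(1,1;id)$ against the constants $\xi_{(i,1)}$, is exactly where the hypothesis $n \ge 2$ is used, as it guarantees a second index value and enough distinct points to realize every separating factor. Finally I would dispose of $\xi_\vartheta$ separately: for any $g \ne \xi_\vartheta$, a context landing $g$ in $D$ keeps $\xi_\vartheta$ at $\xi_\vartheta \notin D$, since non-constant right factors fix $\vartheta$, which completes the verification that $\approx_D$ is the equality relation.
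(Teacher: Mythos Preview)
Your proposal is correct and follows the same overall architecture as the paper: fix $f\neq g$, split by support type, and for $n$-support elements use the standardizing context $u=(1,p;id)$, $v=(q,1;\sigma^{-1})$ to hit $(1,1;id)$. The constant and singleton cases are organized a bit differently in the paper (it treats each pairing directly rather than reducing singletons to constants via a left constant factor), but the computations are the same in spirit. You should state the constant-versus-singleton case explicitly; your precomposition trick with $\xi_{(k,l)}$ collapses both to the \emph{same} constant when the singleton happens to have target $(p,q)$ equal to the constant's value, so there you need a different left factor (e.g.\ $u=\xi_\vartheta$, as in the paper).

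The one genuinely different idea is your ``crux'' case, $n$-support versus full-support constant. The paper does not push the $n$-support map out of $D$ by shifting its first index to $2$; instead it reverses roles using $h=\xi_\vartheta$ on the left: since any non-constant map fixes $\vartheta$, one gets $\xi_\vartheta\,(p,q;\sigma)=\xi_\vartheta$, while $\xi_\vartheta\,\xi_{(k,l)}=\xi_{(k,l)}$, and then a singleton right factor $\sis{(k,l)}{(s,t)}$ lands the constant in $D$ and keeps the $n$-support map at $\xi_\vartheta$. Your trick with $u=(2,p;id)$ is equally valid and makes transparent where $n\ge 2$ enters; the paper's $\xi_\vartheta$-absorption trick is slightly slicker in that it handles all full-support constants uniformly without a column-matching subcase.
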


\begin{proof}
Let $f, g \in A^+(B_n)^{^\circ}$ such that $f \ne g$. We claim that there exist $h$ and $h'$ in $A^+(B_n)^{^\circ}$ such that only one among $hfh'$ and $hgh'$ is in $D$ so that $f \not\approx_D g$. Since $f$ and $g$ are arbitrary, it follows that $\approx_D$ is the equality relation on $A^+(B_n)^{^\circ}$. We prove our claim in various cases on supports of $f$ and $g$ (cf. Theorem \ref{2.t.ce.a+bn}).

\begin{description}
\item[{\rm\em Case 1}] $f$ is of $n$-support, say $(p, q; \sigma)$. Choose $h = (1, p; id)$ and $h' = (q, 1; \sigma^{-1})$. Note that
  \[hfh' = (1, p; id)(p, q; \sigma)(q, 1; \sigma^{-1}) = (1, 1; id) \in D.\] If $g = \xi_{\vartheta}$, then clearly $hgh' = \xi_{\vartheta} \not\in D$. If $g$ is of $n$-support, say $(k, l; \tau)$, then
\[hgh' = (1, p; id)(k, l; \tau)(q, 1; \sigma^{-1}) = \left\{\begin{array}{cl}
                                              \xi_{\vartheta} & \text {if $p \ne k$ or $q \ne l$} \\
                                              (1, 1; \tau\sigma^{-1}) & \text{otherwise}
                                              \end{array}\right. \] so that $hgh' \not\in D$.
In case $g$ is of singleton support, say $\sis{(r , s)}{(u, v)}$,
     \[hgh' = (1, p; id)\sis{(r, s)}{(u, v)}(q, 1; \sigma^{-1}) \; \text{is either $\xi_{\vartheta}$ or is of singleton support}\] so that  $hgh' \not\in D.$ Finally, let $g$ be a full support element, say $g = \xi_{(k, l)}$. Now, for $h = \xi_{\vartheta}$ and $h' = \;\sis{(k, l)}{(s, t)}$, we have
\[hgh' = \xi_{\vartheta}\xi_{(k, l)}\;\sis{(k, l)}{(s, t)} = \xi_{(s, t)} \in D\] whereas
\[hfh' = \xi_{\vartheta}(p, q; \sigma)\sis{(k, l)}{(s, t)} = \xi_{\vartheta} \not\in D.\]

\item[{\rm\em Case 2}] $f$ is of full support, say $\xi_{(p, q)}$. If $g$ is a constant map, then choose $h = \xi_{(p, q)}$ and $h' = \;\sis{(p, q)}{(u, v)}$ so that
\[hfh' = \xi_{(p, q)}\xi_{(p, q)}\;\sis{(p, q)}{(u, v)} = \xi_{(u, v)}\in D\]
but \[hgh' = \xi_{(p, q)}g\;\sis{(p, q)}{(u, v)} = \xi_\vartheta \not\in D.\]

In case $g$ is of singleton support, say $\sis{(r, s)}{(u, v)}$,  choose $h = \xi_\vartheta$ and $h' = (q, q; id)$. Note that
\[hfh'  = \xi_\vartheta\xi_{(p, q)}(q, q; id) = \xi_{(p, q)} \in D\] whereas
\[hgh' = \xi_\vartheta \; \sis{(r, s)}{(u, v)}(q, q; id) = \xi_\vartheta \not\in D.\]

\item[{\rm\em Case 3}] $f$ is of singleton support, say $\sis{(p, q)}{(r, s)}$. Now, we will only assume $g$ is a singleton support map, say $\sis{(k, l)}{(u, v)}$, or the zero map. In any case, for $h = \xi_{(p, q)}$ and $h' = \;\sis{(r, s)}{(u, v)}$, we have $hgh' = \xi_{(p, q)}g\sis{(r, s)}{(u, v)} = \xi_{\vartheta}\not\in D$, whereas \[hfh' = \xi_{(p, q)}\;\sis{(p, q)}{(r, s)}\;\sis{(r, s)}{(u, v)} = \xi_{(u, v)} \in D.\]

\end{description}
\end{proof}

\section{Conclusion}

In this work, we have shown that both the semigroup reducts of $A^+(B_n)$ are syntactic semigroups by deploying various techniques, viz. minimal automata and disjunctive subsets. We established that $A^+(B_1)^{^+}$ and $A^+(B_1)^{^\circ}$ are the transition monoids of some minimal automata. For $n \ge 2$, we proved that $A^+(B_n)^{^+}$ is isomorphic to the syntactic semigroup of a star-free language. Further, for $n \ge 2$, we have shown that the semigroup $A^+(B_n)^{^\circ}$ contains a disjunctive subset.


\begin{thebibliography}{10}

\bibitem{a.koubek98}
P.~Goral{\v{c}}{\'{\i}}k and V.~Koubek.
\newblock On the disjunctive set problem.
\newblock {\em Theoret. Comput. Sci.}, 204(1-2):99--118, 1998.

\bibitem{a.koubek82}
P.~Goral{\v{c}}{\'{\i}}k, V.~Koubek, and J.~Ry{\v{s}}linkov{\'a}.
\newblock On syntacticity of finite regular semigroups.
\newblock {\em Semigroup Forum}, 25(1-2):73--81, 1982.

\bibitem{ic.Jackson05}
M.~Jackson.
\newblock Syntactic semigroups and the finite basis problem.
\newblock In {\em Structural theory of automata, semigroups, and universal
  algebra}, volume 207 of {\em NATO Sci. Ser. II Math. Phys. Chem.}, pages
  159--167. Springer, Dordrecht, 2005.

\bibitem{a.kelarev00}
A.~V. Kelarev and O.~V. Sokratova.
\newblock Syntactic semigroups and graph algebras.
\newblock {\em Bull. Austral. Math. Soc.}, 62(3):471--477, 2000.

\bibitem{a.kvk05}
K.~V. Krishna and N.~Chatterjee.
\newblock A necessary condition to test the minimality of generalized linear
  sequential machines using the theory of near-semirings.
\newblock {\em Algebra Discrete Math.}, (3):30--45, 2005.

\bibitem{t.jk14}
J.~Kumar.
\newblock {\em Affine Near-Semirings over {B}randt Semigroups}.
\newblock PhD thesis, {IIT} {G}uwahati, 2014.

\bibitem{a.jk13}
J.~Kumar and K.~V. Krishna.
\newblock Affine near-semirings over {B}randt semigroups.
\newblock {\em Comm. Algebra}, 42(12):5152--5169, 2014.

\bibitem{a.Lallement75}
G.~Lallement and E.~Milito.
\newblock Recognizable languages and finite semilattices of groups.
\newblock {\em Semigroup Forum}, 11(2):181--184, 1975.

\bibitem{b.lawson}
M.~V. Lawson.
\newblock {\em Finite automata}.
\newblock Chapman \& Hall/CRC, Boca Raton, FL, 2004.

\bibitem{b.pin97}
J.-E. Pin.
\newblock Syntactic semigroups.
\newblock In {\em Handbook of formal languages, {V}ol.\ 1}, pages 679--746.
  Springer, Berlin, 1997.

\bibitem{a.schutzenberger65}
M.~P. Sch{\"u}tzenberger.
\newblock On finite monoids having only trivial subgroups.
\newblock {\em Information and Control}, 8:190--194, 1965.

\bibitem{ic.Simon75}
I.~Simon.
\newblock Piecewise testable events.
\newblock In {\em Automata theory and formal languages ({S}econd {GI} {C}onf.,
  {K}aiserslautern, 1975)}, pages 214--222. Lecture Notes in Comput. Sci., Vol.
  33. Springer, Berlin, 1975.

\end{thebibliography}

\end{document}